\DeclareMathAlphabet{\pazocal}{OMS}{zplm}{m}{n}
\newcommand{\mb}{\mathbb}
\let\bbordermatrix\bordermatrix
\patchcmd{\bbordermatrix}{8.75}{4.75}{}{}
\patchcmd{\bbordermatrix}{\left(}{\left[}{}{}
\patchcmd{\bbordermatrix}{\right)}{\right]}{}{}
\newcommand{\sr}{\stackrel}
\newcommand{\rar}{\rightarrow}
\newcommand{\tri}{\sr{\triangle}{=}}
\newcommand{\be}{\begin{equation}}
\newcommand{\ee}{\end{equation}}
\newcommand{\bea}{\begin{eqnarray}}
\newcommand{\eea}{\end{eqnarray}}
\newcommand{\bes}{\begin{eqnarray*}}
\newcommand{\ees}{\end{eqnarray*}}
\newcommand{\bce}{\begin{center}}
\newcommand{\ece}{\end{center}}
\newcommand{\beae}{\begin{IEEEeqnarray}{rCl}}
\newcommand{\eeae}{\end{IEEEeqnarray}}
\def\VR{\kern-\arraycolsep\strut\vrule &\kern-\arraycolsep}
\def\vr{\kern-\arraycolsep & \kern-\arraycolsep}
\newcommand{\ben}{\begin{enumerate}}
\newcommand{\een}{\end{enumerate}}
\newcommand{\hso}{\hspace{.1in}}
\newcommand{\hst}{\hspace{.2in}}
\newtheorem{theorem}{Theorem}[section]
\newtheorem{remark}{Remark}[section]
\newtheorem{definition}{Definition}[section]
\newtheorem{lemma}{Lemma}[section]
\begin{document}



\title{On the  Cover and Pombra  Gaussian Feedback Capacity: Complete Sequential Characterizations via a  Sufficient Statistic
  }






 \author{
   \IEEEauthorblockN{
     Charalambos D. Charalambous\IEEEauthorrefmark{1} 
     and 
     Christos Kourtellaris\IEEEauthorrefmark{1},
      Stelios Louka\IEEEauthorrefmark{1} 
     \\}
   \IEEEauthorblockA{
     \IEEEauthorrefmark{1}Department of Electrical and Computer Engineering\\University of Cyprus 
     \\
     Email: chadcha@ucy.ac.cy,  kourtellaris.christos@ucy.ac.cy, louka.stelios@ucy.ac.cy}}

\author{%
Charalambos D. Charalambous, Christos Kourtellaris and Stelios Louka
\thanks{$^{1}$Christos Kourtellaris,  Charlambos D. Charalambous and Stelios Louka are with the Department of Electrical and Computer Engineering, University of Cyprus, Nicosia, Cyprus {\tt\small \{kourtellaris.christos,chadcha, slouka01\}@ucy.ac.cy}}}%
%

\maketitle

\begin{abstract}
  The main objective of this  paper is to derive a new sequential characterization of  the Cover and Pombra   \cite{cover-pombra1989} characterization of the $n-$finite block or transmission  feedback information ($n$-FTFI) capacity, which clarifies  several issues of confusion and incorrect interpretation of   results  in literature.  The optimal channel input processes of the  new equivalent sequential characterizations are  expressed as   functionals of  a sufficient statistic and a Gaussian orthogonal innovations process. From the new representations  follows that the Cover and Pombra  characterization of the  $n-$FTFI   capacity  is  expressed as a functional of  two generalized matrix difference Riccati equations (DRE) of filtering theory of Gaussian systems. This  contradicts results which are redundant  in the literature, and illustrates the fundamental complexity of the feedback capacity formula.  


\end{abstract}

%

\section{Introduction, Motivation, Main Results,   Comparison with Current State  of Knowledge}
\label{sect:problem}
\subsection{The Cover and Pombra Feedback Capacity}
Cover and Pombra \cite{cover-pombra1989} were concerned with the feedback capacity of  the additive Gaussian noise (AGN) channel, 
\begin{align}
Y_t=X_t+V_t,  \ \  t=1, \ldots, n, \ \ \frac{1}{n}  {\bf E} \Big\{\sum_{t=1}^{n} (X_t)^2\Big\} \leq \kappa \label{g_cp_1} 
\end{align}
where   $\kappa \in [0,\infty)$ is the total  power of the transmitter,   $X_t :  \Omega \rar {\mathbb R}$,   $Y_t :  \Omega \rar {\mathbb R}$,  and   $V_t :  \Omega \rar {\mathbb R}$,  are the channel input, channel output and noise random variable, respectively,  and the distribution of the  sequence, $V^n = \{ V_1, \ldots, V_n\}$,  denoted by   ${\bf P}_{V^n}(dv^n)$, is jointly Gaussian, not  necessarily stationary or ergodic.   Cover and Pombra considered,  the set of uniformly distributed messages $W : \Omega \rar  {\cal M}^{(n)} \tri  \left\{1, 2,\ldots, \lceil   2^{nR} \rceil        \right\}$,     codewords  of block length $n$, $X_1=e_1(W),\ldots,  X_n=e_n(W,X^{n-1}, Y^{n-1})$,   and decoder functions,  $y^n \longmapsto d_{n}(y^n)\in  {\cal M}^{(n)}$,  with average  error  probability
\bea
{\bf P}_{error}^{(n)} = {\mathbb P}\Big\{d_{n}(Y^n) \neq W\Big\}= \frac{1}{\lceil 2^{nR }  \rceil } \sum_{W=1}^{\lceil 2^{nR} \rceil } {\mathbb  P}\Big\{d_n(Y^n) \neq W\Big\}. \label{g_cp_4}
\eea
According to  \cite[page 39, above Lemma~5]{cover-pombra1989},  ``$X^n $ is causally  to $V^n$'', which is equivalent to the following decomposition of the joint probability distribution of $(X^n, V^n)$: 
\begin{align}
{\bf P}_{X^n, V^n}
=&{\bf P}_{V_n|V^{n-1}, X^{n}} \; {\bf P}_{X_n|X^{n-1}, V^{n-1}} \; \ldots \; {\bf P}_{V_2|V_1, X^2}   {\bf P}_{V_1|X_1} {\bf P}_{X_1}\nonumber\\
=&{\bf P}_{V^n} \prod_{t=1}^n  {\bf P}_{X_t|X^{t-1}, V^{t-1}} \label{g_cp_3}
\end{align}
That is, ${\bf P}_{V_t|V^{t-1},X^t}={\bf P}_{V_t|V^{t-1}}$, or equivalently $X^t \leftrightarrow V^{t-1} \leftrightarrow V_t$ is a Markov chain, for  $t=1, \ldots, n$. As usual, the messages $W$ are independent of the channel noise $V^n$. 

Cover and Pombra \cite[Theorem~1]{cover-pombra1989}, applied the maximum entropy principle of Gaussian RVs,   derived  direct and converse coding theorem, and   characterized feedback capacity, through 
the $n-$FTFI capacity, given by 
\begin{align}
{C}_{n}^{fb}(\kappa) \tri  \sup_{\substack{{\bf P}_{X_t|X^{t-1},Y^{t-1}}, t=1, \ldots, n: \ \ \ \\ \frac{1}{n}  {\bf E} \big\{\sum_{t=1}^{n} \big(X_t\big)^2\big\} \leq \kappa    }} H(Y^{n})-H(V^n)  \label{ftfic_is_g_def_in}
\end{align}
where $(X^n, Y^n, V^n)$ is jointly Gaussian, provided the supremum exists, and   where $H(\cdot)$ denotes differential entropy. \\
Due to the joint Gaussianity of $(X^n, Y^n, V^n)$, and that, any tuple   of the RVs  $(X^n, Y^n, V^n)$  uniquely  specify the third, the  $n-$FTFI capacity  is given  by \cite[eqn(10)]{cover-pombra1989}
\beae
 C_n^{fb}&(\kappa) 
= \max_{  \big({\bf B}^n, K_{{\bf \overline{Z}}^n}\big):    \frac{1}{n}  trace\big({\bf B}^n \; K_{\bf {V^n}} \; ({\bf B}^n)^T+ K_{\bf {\overline{Z}}^n}\big)  \leq \kappa } \Big\{ \nonumber \\
 & \frac{1}{2} \log \frac{|\big({\bf B}^n+I_{n}\big) K_{{\bf V}^n}\big({\bf B}^{n}+I_{n}\big)^T+ K_{{\bf \overline{Z}}^n}|}{|K_{{\bf V}^n}|} \Big\} \label{cp_12}
  \eeae
where  the channel input process $X^n$ is given by  \cite[eqn(11)]{cover-pombra1989}
\begin{align}
&X_t=\sum_{j=1}^{t-1}{B}_{t,j}V_j +\overline{Z}_t, \hso t=1, \ldots, n, \label{cp_6}   \\
&  
{\bf X}^n ={\bf B}^{n} {\bf V}^n + {\bf \overline{Z}}^n, \hso {\bf Y}^n=\Big({\bf B}^n +I_{n\times n}\Big) {\bf V}^n + {\bf \overline{Z}}^n, \label{cp_10}\\
& {\bf \overline{Z}}^n \hso \sim N(0, K_{\bf {\overline{Z}}^n}) \hso  \mbox{and independent of} \hso {\bf V}^n, \label{cp_8} \\
&{\bf X}^n\tri \left[ \begin{array}{cccc} X_1 &X_2 &\ldots &X_n\end{array}\right]^T,  \\
&\frac{1}{n}  {\bf E} \Big\{\sum_{t=1}^{n} (X_t)^2\Big\} =\frac{1}{n}trace \; \Big( {\bf E}\Big({\bf X}^n ({\bf X}^{n})^T\Big) \Big) \leq \kappa. \label{cp_9} 
\end{align} 
and where  notation $N(0, K_{\bf {\overline{Z}}^n})$ means the random variable ${\bf \overline{Z}}^n$ is jointly Gaussian,  with zero mean  and covariance matrix $K_{{\bf \overline{Z}}^n}={\bf E}\{ {\bf \overline{Z}}^n ({\bf \overline{Z}}^n)^T  \}$, and $I_{n}$ denotes the $n\times n $ identity matrix.\\
  The  feedback capacity, $C^{fb}(\kappa)$, is characterized    by  the per unit time limit of the $n-$FTFI capacity  \cite[Theorem~1]{cover-pombra1989}, 
 \begin{align}
 C^{fb}(\kappa) \tri \lim_{n \longrightarrow \infty} \frac{1}{n} C_n^{fb}(\kappa). \label{CP_F}
 \end{align}
 provided the supremum and limit exist. 

\subsection{Main Results of the Paper}
The first main results of this paper is 
\begin{itemize}
\item[R1)] an   equivalent  sequential characterization of the Cover and Pombra \cite{cover-pombra1989} $n-$FTFI capacity, (\ref{cp_12})-(\ref{cp_9}), in which $X^n$ is expressed as,
\end{itemize}
\begin{align}
X_t = \Gamma_t^1 {\bf V}^{t-1} +\Gamma_t^2 {\bf Y}^{t-1} +Z_t,    \hso X_1=Z_1,   \hso t=2, \ldots,n   \label{Q_1_3_s1}
\end{align}
where $Z_t\in  N(0, K_{Z_t}), t=1, \ldots, n$ is an independent , zero mean,   Gaussian sequence, $Z_t \in N(0, K_{Z_t})$ is  independent of $(V^{t-1},X^{t-1},Y^{t-1}),  \hso t=1, \ldots, n$, 
$Z^n$ independent of  $V^{n}$, and $(\Gamma_t^1, \Gamma_t^2) \in (-\infty, \infty) \times (-\infty, \infty) \times$ are   nonrandom, and where $C_n^{fb}(\kappa)$ is 
\begin{align}
{C}_{n}^{fb}(\kappa)
= \sup_{\frac{1}{n}  {\bf E} \big\{\sum_{t=1}^{n} \big(X_t\big)^2\big\} \leq \kappa    }\sum_{t=1}^n \Big\{ H(I_t)-H(\hat{I}_t)\Big\} \label{ftfic_is_g_in}
\end{align}
where $I_t, \hat{I}_t$ are innovations processes defined, 
\begin{align}
I_t \tri Y_t- {\bf E}\big\{Y_t\Big|Y^{t-1}\big\}, \hst  \hat{I}_t \tri V_t- {\bf E}\big\{V_t\Big|V^{t-1}\big\} \label{inn_intr_thm1}
\end{align}
 and where 
the supremum is over all $(\Gamma_t^1, \Gamma_t^2, K_{Z_t}), t=1, \ldots, n$.  The new sequential characterization did not appear elsewhere in the literature.

The second main result is, 
\begin{itemize}
\item[R2)]     the consideration of the the partially observable state space (PO-SS) noise  realization of Definition~\ref{def_nr_2} (below), and an  equivalent  sequential characterization of the Cover and Pombra \cite{cover-pombra1989} $n-$FTFI capacity, (\ref{cp_12})-(\ref{cp_9}), which is expressed as a functional of  two generalized matrix difference Riccati equations (DRE) of filtering theory of Gaussian systems. 
\end{itemize}
 
The third main result is,  
 \begin{itemize}
\item[R3)] the use of a sufficient statistic to expressed ${C}_{n}^{fb}(\kappa)$, which allows the identification of   necessary for the convergence,  $C^{fb}(\kappa) \tri \lim_{n \longrightarrow \infty} \frac{1}{n} C_n^{fb}(\kappa)$, in terms of  properties of  generalized DRE, and its analysis using  sequential methods, such as, dynamic programming. 
\end{itemize}

\begin{definition} 
 \label{def_nr_2}
A time-varying PO-SS realization of the Gaussian noise $V^n \in N(0, K_{V^n})$ is defined by  
\begin{align}
&S_{t+1}=A_{t} S_{t}+ B_{t} W_t, \hso t=1, \ldots, n-1\label{real_1a}\\
&V_t= C_t S_{t} +  N_t W_t,  \hso t=1, \ldots, n, \label{real_1_ab}\\
 & S_1\in N(\mu_{S_1},K_{S_1}), \hso K_{S_1} \succeq 0 \hso \mbox{(positive semidefinite)}, \\
&W_t\in N(0,K_{W_t}),\hso K_{W_t} \succeq 0, \hso  t=1 \ldots, n, \\
& S_t : \Omega \rar {\mathbb R}^{n_s}, \  W_t : \Omega \rar {\mathbb R}^{n_w}, \  V_t : \Omega \rar {\mathbb R}^{n_v}, \\\ &R_t\tri N_t K_{W_t} N_t^T \succ 0, \label{cp_e_ar2_s1_a_new}
\end{align}
where  $W_t, \  t=1 \ldots, n$ is an independent Gaussian process,  $n_s, n_w$ are arbitrary positive integers, and   $(A_{t}, B_{t}, C_t, N_t,\mu_{S_1}, K_{S_1}, K_{W_t})$ are nonrandom for all $t$, and $n_s, n_w$ are finite positive integers. 
\end{definition}

The third main result is,

\begin{itemize} 
\item[R4)] Contrary to the claims in Kim's \cite{kim2010},  the  time-domain characterization of feedback capacity \cite[Theorem~6.1]{kim2010}, does not correspond to the Cover and Pombra code formulation and assumption, and  hence recent literature which makes use of \cite{kim2010}, such as,  \cite{liu-han2017,liu-han2019,gattami2019,ihara2019}, should be read with caution.  
\end{itemize} 
The justification of 4) is easy to verify, by using, for example, the channel input process, considered  in  \cite[Page 76]{kim2010}, for the  autoregressive moving average noise model,  $V_t=c V_{t-1}+ W_t - aW_{t-1},   t=1, \ldots, n$, expressed in state space form, $S_t \tri \frac{cV_{t-1}-aW_{t-1}}{c-a}, t =1, \dots,n$, 
$S_{t+1} = cS_t+W_t, t =1, \dots,n$, $V_t = (c-a)S_t +W_t, \hso t =1, \dots,n$. According to \cite[Page 76]{kim2010},  the channel   input  $X_n$, is 
\begin{align}
&X_1= Z_1, \hst \mbox{$Z_n$ is zero mean, variance  $K_{Z}>0$}
 \label{kim_in_2}\\
&X_n=\Lambda  \Big( S_n-{\bf E}\Big\{S_n \Big|Y^{n-1}\Big\}\Big), \hst n=2,  \ldots  \label{kim_in_3}
\end{align}
However, the above channel input depends on the state $S_n$, and for this to hold it is necessary that the chain of equalities hold:
\begin{align*}
{\bf P}_{X_t|X^{t-1},Y^{t-1}}=&{\bf P}_{X_t|V^{t-1},Y^{t-1}} \; \mbox{always holds by (\ref{g_cp_1})} 
\\
=&{\bf P}_{X_t|V^{t-1},Y^{t-1},S_1} \; \mbox{if   $S_1=s$ is known to the code}\nonumber \\
=&{\bf P}_{X_t|S^t,Y^{t-1},S_1} \; \mbox{if $(V^{t-1}, S_1)$ uniquely defines $S^t$}. 
\end{align*} 
The above shows, 

{\it  a necessary condition for validity of  (\ref{kim_in_2}),  (\ref{kim_in_3}) is: given the initial state of the noise $S_1=s=\frac{cv_{t-1}-aw_{t-1}}{c-a}$, which should be  known to the encoder, the channel noise  $V^n\tri \{X_1, \ldots, V_{n-1}\}$ uniquely defines the state variables $S^n$. }

Clearly, the above necessary condition, never holds  for the noise of Definition~\ref{def_nr_2}, with 
\begin{align*}
&B_tW_t=B_t^1 W_t^1+B_t^2 W_t^2,  \hso N_t W_t=N_t^1 W_t^1 + N_t^2 W_t^2,\\
& \mbox{$W^{1,n}$ and $W^{2,n}$  independent  sequences.}
\end{align*}
We should emphasize that  the above necessary condition is explicitly stated by  Yang, Kavcic, and Tatikonda in \cite{yang-kavcic-tatikonda2007,kim2010}. 

Although, due to space limitations, the complete  proofs of the results of this paper are omitted, these  are found  \cite{charalambous2020new}.

\subsection{Relation to Past Literature}
Over the years,  considerable efforts have been devoted  to compute $C_n^{fb}(\kappa)$ and $C^{fb}(\kappa)$, \cite{yang-kavcic-tatikonda2007,kim2010,liu-han2017,liu-han2019,gattami2019,ihara2019}, often  under simplified assumptions on  the channel noise. In addition,    bounds are described in  \cite{chen-yanaki1999,chen-yanaki2000}, while numerical methods are developed in \cite{ordentlich1994}, mostly for time-invariant AGN channel, driven by stationary noise. We should mention that most papers considered a variant of (\ref{CP_F}), by interchanging the per unit time limit and the maximization operations, 
under the assumption: the joint process $(X^n, Y^n), n=1, 2, \ldots$ is either {\it jointly stationary or asymptotically stationary} (see \cite{kim2010,liu-han2017,liu-han2019,gattami2019}), and  {\it the  joint  distribution of the joint process  $(X^n, Y^n), n=1, 2, \ldots$ is time-invariant}. 
A recent investigation of AGN channels driven by autoregressive unit memory stable and unstable noise with and without feedback is \cite{kourtellaris-charalambous-loyka:2020b,kourtellaris-charalambous-loyka:2020a}, while the connection of ergodic theory and feedback capacity of unstable channels is discussed in \cite{kourtellaris-charalambousIT2015_Part_1,charalambous-kourtellaris-loykaIT2015_Part_2}.

 \section{Equivalent Sequential Characterizations of the Cover and Pombra $n-$FTFI}
  \label{sect:AGN}

\subsection{Notation}
 Throughout the paper, we use the following notation.\\
${\mathbb Z}\tri \{\ldots, -1,0,1,\ldots\}, {\mathbb Z}_+ \tri \{1,\ldots\},    {\mathbb Z}_+^n \tri \{1,2, \ldots, n\}$, where $n$ is a finite positive integer. \\
${\mathbb R}\tri (-\infty, \infty)$,  and ${\mathbb R}^m$ is the vector space of tuples
of the real numbers  for an integer  $n\in {\mathbb Z}_+$.\\
${\mathbb R}^{n \times m}$ is the set of $n$ by $m$ matrices with entries from the set of real numbers for integers   $(n,m)\in {\mathbb Z}_+\times {\mathbb Z}_+$. \\
$\Big(\Omega, {\cal F}, {\mathbb P}\Big)$ denotes a probability space. 
Given a  random variable $X: \Omega \rar {\mathbb R}^{n_x}, n_x\in {\mathbb Z}_+^n$, its  induced    distribution on ${\mathbb R}^{n_x}$ is denoted by  ${\bf P}_{X}$. \\
${\bf P}_{X}\in N(\mu_{X}, K_{X}), K_{X}\succeq 0$ denotes a Gaussian distributed RV $X$, with   mean value  $\mu_{X}={\bf E}\{X\}$ and covariance matrix $K_{X}=cov(X,X)\succeq 0$, defined by  
\begin{align}
K_X = cov(X,X) \tri {\bf E}\Big\{\Big(X-{\bf E}\Big\{X\Big\}\Big) \Big(X-{\bf E}\Big\{X\Big\}\Big)^T \Big\}.\nonumber
\end{align}  
Given another  Gaussian random variables $Y: \Omega \rar {\mathbb R}^{n_y},  n_y\in {\mathbb Z}_+^n$, which is jointly Gaussian distributed with $X$, i.e., the  joint distribution is ${\bf P}_{X,Y}$,   the  conditional covariance of $X$ given $Y$, $K_{X|Y} = cov(X,X\Big|Y)$, is  defined by
\begin{align}
K_{X|Y} \tri & {\bf E}\Big\{\Big(X-{\bf E}\Big\{X\Big|Y\Big\}\Big) \Big(X-{\bf E}\Big\{X\Big|Y\Big\}\Big)^T\Big|Y \Big\}\nonumber\\
=&{\bf E}\Big\{\Big(X-{\bf E}\Big\{X\Big|Y\Big\}\Big) \Big(X-{\bf E}\Big\{X\Big|Y\Big\}\Big)^T \Big\}\nonumber
\end{align}
where the last equality is due to a property of  jointly Gaussian distributed RVs.

\subsection{General Equivalent Characterization of Cover and Pombra $n-$FTFI Capacity}
First, we show validity of R1).

\begin{theorem}  
\label{thm_FTFI}
The Cover and Pombra \cite{cover-pombra1989} expressions  (\ref{cp_12})-(\ref{cp_9}), are equivalently represented by (\ref{Q_1_3_s1}), (\ref{ftfic_is_g_in}). 
\end{theorem}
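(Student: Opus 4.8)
The plan is to decouple the equivalence into two independent pieces: equality of the two objective functionals, and equality of the two input parametrizations (hence of the feasible sets over which the supremum is taken). Since $(X^n,Y^n,V^n)$ is jointly Gaussian in both characterizations, and since the power constraint $\frac{1}{n}{\bf E}\{\sum_{t=1}^n X_t^2\}\leq\kappa$ depends only on the law of $X^n$, it suffices to prove that the two parametrizations generate exactly the same family of admissible jointly Gaussian triples $(X^n,Y^n,V^n)$ obeying the feedback/causality decomposition (\ref{g_cp_3}).

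First I would establish the objective identity. Applying the entropy chain rule gives $H(Y^n)=\sum_{t=1}^n H(Y_t\mid Y^{t-1})$ and $H(V^n)=\sum_{t=1}^n H(V_t\mid V^{t-1})$. For jointly Gaussian RVs the conditional differential entropy is a function only of the conditional (estimation error) covariance, and this covariance coincides with that of the innovation; hence $H(Y_t\mid Y^{t-1})=H(I_t)$ and $H(V_t\mid V^{t-1})=H(\hat I_t)$ with $I_t,\hat I_t$ as in (\ref{inn_intr_thm1}). Subtracting term by term converts the Cover and Pombra objective $H(Y^n)-H(V^n)$ of (\ref{ftfic_is_g_def_in}) into the innovations form $\sum_{t=1}^n\{H(I_t)-H(\hat I_t)\}$ of (\ref{ftfic_is_g_in}).

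Next I would show the two input families coincide. For the direction (\ref{cp_6})$\Rightarrow$(\ref{Q_1_3_s1}), I would orthogonalize the generally correlated Gaussian vector ${\bf \overline{Z}}^n$ by its own innovations, writing $\overline{Z}_t=\sum_{j<t}\alpha_{t,j}\overline{Z}_j+Z_t$ with $Z_t\perp \overline{Z}^{t-1}$, so that $Z^n$ is an orthogonal (hence, by Gaussianity, independent) sequence, independent of ${\bf V}^n$ because ${\bf \overline{Z}}^n\perp{\bf V}^n$. The key structural observation is that, under (\ref{cp_10}), the identity $\overline{Z}_j=Y_j-V_j-\sum_{i<j}B_{j,i}V_i$ makes the linear span of $(V^{t-1},\overline{Z}^{t-1})$ equal to that of $(V^{t-1},Y^{t-1})$; substituting this into the orthogonalized expression for $\overline{Z}_t$ re-expresses $X_t$ in the form $\Gamma_t^1{\bf V}^{t-1}+\Gamma_t^2{\bf Y}^{t-1}+Z_t$ of (\ref{Q_1_3_s1}), and orthogonality of $Z_t$ to both $V^{t-1}$ and $\overline{Z}^{t-1}$ delivers the required independence of $Z_t$ from $(V^{t-1},X^{t-1},Y^{t-1})$. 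For the reverse direction I would argue by induction on $t$ that each $X_j$ is a linear functional of $V^{j-1}$ and $Z^j$; then $Y_j=X_j+V_j$ is a linear functional of $V^j$ and $Z^j$, so back-substituting $Y^{t-1}$ in (\ref{Q_1_3_s1}) eliminates the output dependence and collects $X_t$ into a strictly causal combination $\sum_{j<t}B_{t,j}V_j+\overline{Z}_t$, where $\overline{Z}_t$ is a linear functional of $Z^t$, hence Gaussian and independent of ${\bf V}^n$, recovering (\ref{cp_6})--(\ref{cp_8}).

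I expect the main obstacle to be the bookkeeping certifying that independence and the Markov/causality structure are preserved under this change of variables: specifically, verifying that the span identity between $(V^{t-1},\overline{Z}^{t-1})$ and $(V^{t-1},Y^{t-1})$ is an invertible linear change of coordinates at each $t$, so that no component of $V_t$ leaks into $X_t$ and the decomposition (\ref{g_cp_3}) continues to hold. Once these are in place, equality of the two suprema is immediate, because the underlying triple $(X^n,Y^n,V^n)$, and therefore both the objective and the trace power constraint, are identical across the two representations.
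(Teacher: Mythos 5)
Your proposal is correct, and its core mechanism is the same as the paper's: orthogonalize the correlated Gaussian noise $\overline{Z}^n$ into an innovations sequence and use joint Gaussianity to make conditional means linear, then convert the objective with the entropy chain rule and the independence of innovations from the past. The implementations differ in two ways worth noting. First, the paper defines $Z_t \triangleq \overline{Z}_t - {\bf E}\{\overline{Z}_t \mid X^{t-1},V^{t-1},Y^{t-1}\} = \overline{Z}_t - {\bf E}\{\overline{Z}_t \mid V^{t-1},Y^{t-1}\}$ in one step (using that $X^{t-1}$ is determined by $(V^{t-1},Y^{t-1})$), and then invokes Gaussianity to write the conditional mean as $\overline{\Gamma}_t\bigl[({\bf V}^{t-1})^T \; ({\bf Y}^{t-1})^T\bigr]^T$, which immediately yields (\ref{Q_1_3_s1}); you instead Gram--Schmidt $\overline{Z}_t$ against its own past $\overline{Z}^{t-1}$ and then pass through the triangular span identity between $(V^{t-1},\overline{Z}^{t-1})$ and $(V^{t-1},Y^{t-1})$. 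The two constructions produce the same $Z_t$, since $\overline{Z}^n \perp {\bf V}^n$ forces ${\bf E}\{\overline{Z}_t \mid \overline{Z}^{t-1}\} = {\bf E}\{\overline{Z}_t \mid \overline{Z}^{t-1}, V^{t-1}\} = {\bf E}\{\overline{Z}_t \mid V^{t-1},Y^{t-1}\}$, so this is a re-routing rather than a new idea; the paper's route is shorter, yours makes the invertible change of coordinates explicit. Second, and more substantively, you prove both inclusions --- Cover--Pombra inputs map into the form (\ref{Q_1_3_s1}) \emph{and}, by induction and back-substitution, every input of the form (\ref{Q_1_3_s1}) collapses back to $\sum_{j<t}B_{t,j}V_j + \overline{Z}_t$ with $\overline{Z}^n$ Gaussian and independent of ${\bf V}^n$ --- whereas the paper's outline presents only the forward map (the converse being deferred to the full version). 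Since the theorem asserts equivalence of the two characterizations, i.e., equality of the two suprema over the respective feasible families, your two-sided argument is the logically complete form of what the outline gestures at, and is the main thing your write-up buys over the paper's sketch.
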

\begin{proof} The complete prove  is found  is given See \cite[Section VI.A]{charalambous2020new}. Below, we provide an outline.  Consider (\ref{cp_6}) and   define the process
\begin{align}
\hso Z_1\tri& \overline{Z}_{1}- {\bf E}\Big\{ \overline{Z}_1 \Big\}, \\ 
Z_t\tri& \overline{Z}_{t}- {\bf E}\Big\{ \overline{Z}_t    \Big|X^{t-1}, V^{t-1}, Y^{t-1}\Big\},\hso t=2, \ldots, n,\\
=&\overline{Z}_t- {\bf E}\Big\{\overline{Z}_{t}\Big| V^{t-1}, Y^{t-1}\Big\}\label{orthogonal_11_n}
\end{align} 
where the last equality is due to, $X^{t-1}$ is uniquely  defined by $(V^{t-1}, Y^{t-1})$.
Then $Z_t$ is a Gaussian orthogonal innovations process, i.e., $Z_t$ is  independent of  $(X^{t-1}, V^{t-1}, Y^{t-1})$, for $t=2, \ldots, n$, and ${\bf E}\big\{Z_t\big\}=0,$ for $t=1, \ldots, n$. 
By (\ref{cp_6}), 
\begin{align}
X_t =&  \sum_{j=1}^{t-1} B_{t,j} V_j +\overline{Z}_t, \hso t=1, \ldots, n,\\
=& \sum_{j=1}^{t-1} B_{t,j} V_j +{\bf E}\Big\{\overline{Z}_{t}\Big| V^{t-1}, Y^{t-1}\Big\}  + Z_t, \hso \mbox{by (\ref{orthogonal_11_n})}\\
\sr{(a)}{=}&  \sum_{j=1}^{t-1} B_{t,j} V_j+ \overline{\Gamma}_t \left( \begin{array}{c}  {\bf V}^{t-1}\\ {\bf Y}^{t-1}\end{array} \right)+ Z_t, \hso \mbox{for some $\overline{\Gamma}_t$}\\
=&  \sum_{j=1}^{t-1} \Gamma_{t,j}^1 V_j+ \sum_{j=1}^{t-1}\Gamma_{t,j}^2   Y_j+ Z_t, \hso \mbox{for some $\Gamma_{\cdot,\cdot}^1, \Gamma_{\cdot,\cdot}^2$}\\
=&\Gamma_{t}^1 {\bf V}^{t-1}+ \Gamma_{t}^2   {\bf Y}^{t-1}+ Z_t, \hst \mbox{by definition} \label{eqn_in_1}
\end{align}
where $(a)$ is due to  the joint Gaussianity of $({Z}^n, X^n, Y^n)$. From (\ref{eqn_in_1}) and  the  independence of $Z_t$ and  $(X^{t-1}, V^{t-1}, Y^{t-1})$, for $t=2, \ldots, n$,  it then  follows (\ref{Q_1_3_s1}) and the  properties.  To show (\ref{ftfic_is_g_in}), we notice that $H(Y^n)=\sum_{t=1}^nH(Y_t|Y^{t-1})= \sum_{t=1}^nH(I_t|Y^{t-1})=\sum_{t=1}^nH(I_t)$ by the orthogonality of the innovations process (\ref{inn_intr_thm1}). Similarly for $H(V^n)$.   
\end{proof}

%

\begin{remark} By (\ref{ftfic_is_g_in}),  ${C}_{n}^{fb}(\kappa)$ is expressed in terms of the entropy of two invations processes, i.e., independent processes. Consequently, its analysis, such as, the existence of the limit  $\lim_{n \longrightarrow \infty} \frac{1}{n} C_n^{fb}(\kappa)$  is much easier to address, as well its computation. 
\end{remark}

\subsection{A Sufficient Statistic Approach to the  Characterization of $n-$FTFI Capacity of AGN Channels Driven by PO-SS Noise Realizations}
\label{sect_POSS_SS}
Now, we  turn our attention to the derivation of statements under R2) and R3).

We note that  characterization of  the $n-$FTFI capacity 
 $C_n^{fb}(\kappa)$  given  (\ref{ftfic_is_g_in}), although compactly represented,  is   not computationally very practical,  because the input process $X^n$ is not expressed in terms of a {\it sufficient statistic} that summarizes the information of the channel input strategy \cite{kumar-varayia1986}.     \\
We wish to identify a {\it sufficient statistic} for the input process $X_t$, given by (\ref{Q_1_3_s1}),  called the {\it state of the input}, which  summarizes  the  information contained in $(V^{t-1}, Y^{t-1})$. It will then become apparent that the characterization of the $n-$FTFI capacity  can be expressed as a functional of {\it two generalized matrix DREs}.

First, since 
 $C_n^{fb}(\kappa)$ is given by (\ref{ftfic_is_g_in}), we need to compute  the (differential) entropy $H(V^n)$ of $V^n$. The following lemma is useful in this respect.

\begin{lemma} 
\label{lemma_POSS}
Consider the PO-SS realization of $V^n$ of  Definition~\ref{def_nr_2}.  Define the conditional covariance $\Sigma_{t} \tri  cov\big(S_t, S_t\Big|V^{t-1}), \ \Sigma_1 \tri  cov\big(S_1, S_1) =K_{S_1}$ and the conditional mean  of $S_t$ given $V^{t-1}$, $\hat{S}_{t} \tri {\bf E}\Big\{S_t\Big|V^{t-1}\Big\}, \hat{S}_1 \tri \mu_{S_1}$. 
Denote also  the conditional mean and covariance of $V_t$ given $V^{t-1}$, by  $\mu_{V_t|V^{t-1}}, K_{V_t|V^{t-1}}$. \\
The following hold.  \\
(i)  $\hat{S}_t$ satisfies the generalized Kalman-filter recursion   
\begin{align}
&\hat{S}_{t+1}=A_{t} \hat{S}_{t}+ M_{t}(\Sigma_t)  \hat{I}_t, \hso \hat{S}_{1}=\mu_{S_1}, \label{kal_fil_noise} \\
& M_{t}(\Sigma_t) \tri \Big( A_{t}  \Sigma_{t} C_{t}^T+B_{t} K_{W_{t}}N_{t}^T\Big)\Big(N_{t} K_{W_{t}}N_{t}^T+ C_{t} \Sigma_{t} C_{t}^T \Big)^{-1}, \\
&\hat{I}_t \tri V_t -{\bf E}\Big\{V_t\Big|V^{t-1}\Big\}=  V_t - C_t \hat{S}_{t} \nonumber\\ \ & \hspace{0.25cm} =C_t\big(S_{t}- \hat{S}_{t}\big) + N_t W_t, \hso t=1, \ldots, n, \label{inn_po_1} 
\end{align}
\begin{align}
& \hat{I}_t \in N(0, K_{\hat{I}_t}), \hso t=1, \ldots, n \hso \mbox{is an orthogonal innovations}\nonumber \\ & \hspace{0.45cm} \mbox{process, i.e., $\hat{I}_t$ is independent of $\hat{I}_s$, for all $t \neq s$, $\hat{I}_s$,}\nonumber \\
& \hspace{0.3cm}\mbox{ for all $t \neq s$, and $\hat{I}_t$ is independent of  $V^{t-1}$},  \label{inn_po_2}\\
&K_{\hat{I}_t} \tri cov(\hat{I}_t, \hat{I}_t)= C_t \Sigma_t C_t^T +N_t K_{W_t} N_t^T. \label{cov_in_noise}
\end{align}
(ii) The  covariance of the error, $E_t\tri S_t-\hat{S}_t$  is such that  ${\bf E} \big\{E_t E_t^T\big\}=\Sigma_t$ and satisfies the generalized matrix DRE 
\begin{align}
\Sigma_{t+1}= &A_{t} \Sigma_{t}A_{t}^T  + B_{t}K_{W_{t}}B_{t}^T -\Big(A_{t}  \Sigma_{t}C_{t}^T+B_{t}K_{W_{t}}N_{t}^T  \Big) \nonumber \\
& \hspace{0.1cm} . \Big(N_{t} K_{W_{t}} N_{t}^T+C_{t}  \Sigma_{t} C_{t}^T\Big)^{-1}\Big( A_{t}  \Sigma_{t}C_{t}^T+ B_{t} K_{W_{t}}N_{t}^T  \Big)^T,\nonumber\\   & \hspace{0.1cm} t=1, \ldots, n, \hso \Sigma_{1}=K_{S_1}\succeq 0, \hso \Sigma_t \succeq 0. \label{dre_1}
\end{align}
(iii)  $\mu_{V_t|V^{t-1}}, K_{V_t|V^{t-1}}$ are given by  
\begin{align}
\mu_{V_t|V^{t-1}}=&C_t \hat{S}_{t},  \hso t=1, \ldots, n,  \\
K_{V_t|V^{t-1}}=&K_{\hat{I}_t} =C_t \Sigma_{t} C_t^T + N_t K_{W_t} N_t^T,  \hso t=1, \ldots, n. \label{inn_noise}  
\end{align}
(iv) The entropy of $V^n$, is given by 
\begin{align}
H(V^n)=&\sum_{t=1}^n H(\hat{I}_t)\\
=&   \frac{1}{2}\sum_{t=1}^n \log\Big(2\pi e \Big[C_t \Sigma_{t} C_t^T + N_t K_{W_t} N_t^T\Big] \Big) \label{entr_noise}
\end{align}
\end{lemma}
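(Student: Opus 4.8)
The plan is to recognize that Definition~\ref{def_nr_2} is precisely a linear-Gaussian state space model in which the process noise $B_t W_t$ and the measurement noise $N_t W_t$ are \emph{correlated}, since both are driven by the common random vector $W_t$; consequently all four claims follow from the innovations form of the Kalman filter, adapted to correlated noise. I would carry out the argument entirely in terms of the innovations $\hat{I}_t \tri V_t - {\bf E}\{V_t|V^{t-1}\}$, whose orthogonality and Gaussianity (\ref{inn_po_2}) are immediate. Linearity of Gaussian conditional expectation together with the independence of $W_t$ and $V^{t-1}$ gives ${\bf E}\{V_t|V^{t-1}\} = C_t\hat{S}_t$, hence $\hat{I}_t = C_t E_t + N_t W_t$ with $E_t \tri S_t - \hat{S}_t$, and each $\hat{I}_t$ is by construction orthogonal to the span of $V^{t-1}$, i.e. to $\hat{I}_s$ for $s<t$. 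Taking covariances and using that $E_t$ is a function of $(S_1, W^{t-1})$, hence independent of $W_t$, makes the cross terms vanish and yields $K_{\hat{I}_t} = C_t \Sigma_t C_t^T + N_t K_{W_t} N_t^T$, which is (\ref{cov_in_noise}) and simultaneously the conditional covariance in (\ref{inn_noise}), thereby establishing (iii).

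For part (i) I would compute $\hat{S}_{t+1} = {\bf E}\{A_t S_t + B_t W_t | V^t\}$ by projecting onto the enlarged span $V^t = \{V^{t-1}, \hat{I}_t\}$, using the orthogonal update
\begin{align*}
{\bf E}\{S_t|V^t\} &= \hat{S}_t + cov(S_t,\hat{I}_t)\, K_{\hat{I}_t}^{-1}\hat{I}_t, \\
{\bf E}\{W_t|V^t\} &= cov(W_t,\hat{I}_t)\, K_{\hat{I}_t}^{-1}\hat{I}_t.
\end{align*}
The decisive point, and the source of the word ``generalized'', is that the second cross-covariance does \emph{not} vanish: $cov(W_t,\hat{I}_t) = K_{W_t} N_t^T$, because $W_t$ enters the measurement directly. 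Combining this with $cov(S_t,\hat{I}_t) = \Sigma_t C_t^T$ (which uses ${\bf E}\{S_t E_t^T\}=\Sigma_t$ by the orthogonality principle and $cov(S_t,W_t)=0$) produces the gain $M_t(\Sigma_t) = \big(A_t\Sigma_t C_t^T + B_t K_{W_t} N_t^T\big)K_{\hat{I}_t}^{-1}$ and the recursion (\ref{kal_fil_noise}), started from $\hat{S}_1=\mu_{S_1}$.

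Part (ii) then follows by propagating the error. Subtracting (\ref{kal_fil_noise}) from the state equation and substituting $\hat{I}_t = C_t E_t + N_t W_t$ gives the closed error recursion
\begin{align*}
E_{t+1} = \big(A_t - M_t(\Sigma_t)C_t\big)E_t + \big(B_t - M_t(\Sigma_t)N_t\big)W_t.
\end{align*}
Since $E_t$ is independent of $W_t$, the covariance splits into two terms; expanding and inserting the explicit gain, a routine but slightly tedious cancellation collapses the expression into the closed Riccati form (\ref{dre_1}). Verifying that the cross terms generated by the correlated noise recombine exactly into the single quadratic term $-\big(A_t\Sigma_t C_t^T + B_t K_{W_t} N_t^T\big)K_{\hat{I}_t}^{-1}\big(\cdot\big)^T$ is the main computational obstacle, though it is standard once the gain is written in the generalized form; positive semidefiniteness $\Sigma_t\succeq 0$ is inherited because each $\Sigma_t$ is a genuine error covariance, with $\Sigma_1 = K_{S_1}$.

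Finally, part (iv) is immediate: the chain rule gives $H(V^n)=\sum_{t=1}^n H(V_t|V^{t-1})$, and conditioning on $V^{t-1}$ shifts $V_t$ by the quantity $C_t\hat{S}_t$, which is deterministic given $V^{t-1}$, so differential entropy is preserved and $H(V_t|V^{t-1}) = H(\hat{I}_t)$. Applying the Gaussian differential entropy formula to $\hat{I}_t \in N(0,K_{\hat{I}_t})$ with (\ref{cov_in_noise}) then yields (\ref{entr_noise}).
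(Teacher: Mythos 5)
Your proposal is correct and follows essentially the same route as the paper: the paper's proof of Lemma~\ref{lemma_POSS} is simply an appeal to the generalized Kalman-filter equations for state-space models with correlated process and measurement noise (citing Caines and the companion paper), and your derivation via the innovations $\hat{I}_t = C_t E_t + N_t W_t$, the orthogonal-projection update with the nonvanishing cross-covariance $cov(W_t,\hat{I}_t)=K_{W_t}N_t^T$, and the error recursion $E_{t+1}=(A_t-M_tC_t)E_t+(B_t-M_tN_t)W_t$ is precisely that standard argument, carried out in full (the Riccati cancellation you flag does collapse as claimed, and the chain-rule/translation-invariance step for part (iv) is valid since $C_t\hat{S}_t$ is $V^{t-1}$-measurable). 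The only difference is that you make self-contained what the paper outsources to references, which requires the assumption $N_tK_{W_t}N_t^T\succ 0$ of Definition~\ref{def_nr_2} to justify inverting $K_{\hat{I}_t}$, a point worth stating explicitly.
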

\begin{proof} This follows from generalized kalman-filter equations  \cite{caines1988}; or \cite[Section II.B, proof of Lemma II.1]{charalambous2020new}
\end{proof}

Next,  we invoke $C_n^{fb}(\kappa)$  given by (\ref{ftfic_is_g_in}) and Lemma~\ref{lemma_POSS} to  show that for each time $t$,  $X_t$ is expressed as
 \begin{align}
&X_t=  \Lambda_{t}\Big(\hat{S}_t- {\bf E}\Big\{\hat{S}_t\Big|Y^{t-1}\Big\}\Big) + Z_t, \hso  t=1, \ldots, n,  \label{suf_s} \\
&\hat{S}_t \tri {\bf E}\Big\{S_t\Big|V^{t-1}\Big\}, \hst \widehat{\hat{S}}_t\tri {\bf E}\Big\{\hat{S}_t\Big|Y^{t-1}\Big\}
\end{align}
which  means, at each time $t$, the  state of the channel input process $X_t$ is $\Big(\hat{S}_t, \widehat{\hat{S}}_t\Big)$; this is the {\it sufficient statistic}. We show that  $\widehat{\hat{S}}_t$ satisfies another  generalized Kalman-filter recursion. \\
Now,  we prepare to prove  (\ref{suf_s}) and the   main theorem.   We start with preliminary calculations.    
\begin{align*}
&{\mb P}\big\{Y_t \in dy \Big| Y^{t-1}, X^t\big\} ={\bf P}_t(dy |X_t,V^{t-1}), \ \ t=2, \ldots, n,  \\
=&{\bf P}_t(dy |X_t,V^{t-1},\hat{S}^{t}), \hso  \mbox{by $\hat{S}_t= {\bf E}\Big\{S_t\Big| V^{t-1}\Big\}$}\\
=&{\bf P}_t(dy |X_t,V^{t-1},\hat{S}_{t}, \hat{I}^{t-1}), \hso  \mbox{by (\ref{inn_po_1}), i.e., $V_t=C_t \hat{S}_t+\hat{I}_t$}\\
=&{\bf P}_t(dy |X_t,\hat{S}_{t}), \hso  \mbox{by $Y_t=X_t+V_t=X_t+C_t\hat{S}_t+\hat{I}_t$ and (\ref{inn_po_2})}.
\end{align*}
At $t=1$,  
${\mb P}\big\{Y_1 \in dy \Big| X_1\big\}={\bf P}_1(dy |X_1)$. By (\ref{PO_state_1}), it follows that the conditional distribution of $Y_t$ given $Y^{t-1}=y^{t-1}$ is 
 \begin{align}
 {\bf P}_t(dy_t|y^{t-1})=&\int {\bf P}_t(dy |x_t,\hat{s}_{t}){\bf P}_t(dx_t|\hat{s}_t,y^{t-1}) {\bf P}_t(d\hat{s}_t|y^{t-1}), \nonumber \\ & \ \ t=2, \ldots, n, \label{PO_tr_1} \\
{\bf P}_1(dy_1)=&\int {\bf P}_1(dy |x_t,\hat{s}_1){\bf P}_1(dx_1|\hat{s}_1) {\bf P}_1(d \hat{s}_1). \label{PO_tr_2}
 \end{align}
From the above distributions,  at each time $t$, the distribution of $X_t$ conditioned on $(V^{t-1}, Y^{t-1})$, induced by (\ref{Q_1_3_s1}), is also expressed as a linear functional of  $(\hat{S}_t, Y^{t-1})$, for $t=1, \ldots, n$. \\
The next theorem further shows that for each $t$,  the dependence of $X_t$ on $Y^{t-1}$ is expressed in terms of  ${\bf E}\Big\{\hat{S}_t\Big|Y^{t-1}\Big\}$ for $t=1,\ldots, n$,  and this dependence gives rise to an equivalent sequential characterization of the Cover and Pombra $n-$FTFI capacity, $C_n^{fb}(\kappa)$.

\begin{theorem} Equivalent characterization of  $n-$FTFI Capacity ${C}_n^{fb}(\kappa)$ for PO-SS Noise realizations   \\ 
\label{thm_SS}
  Consider also the generalized Kalman-filter of Lemma~\ref{lemma_POSS}. \\
Define the conditional covariance and conditional mean  of  $\hat{S}_t$ given $Y^{t-1}$, by 
\begin{align}
K_{t} \tri & cov\Big(\hat{S}_t,\hat{S}_t\Big|Y^{t-1}\Big)= {\bf E}\Big\{\Big(\hat{S}_t- \widehat{\hat{S}}_{t}\Big)\Big(\hat{S}_t-\widehat{\hat{S}}_{t} \Big)^T\Big\}, \\  \widehat{\hat{S}}_{t} \tri& {\bf E}\Big\{\hat{S}_t\Big|Y^{t-1}\Big\},  \hso  t=2, \ldots, n, \label{ric_ga}     \\
\widehat{\hat{S}}_1 \tri &\mu_{S_1}, \hso K_1 \tri 0 \label{ric_gb}.
\end{align}
Then the following hold.  \\
(a) An equivalent characterization of the Cover and Pombra  $n-$FTFI capacity ${C}_n^{fb}(\kappa)$,   defined  by (\ref{cp_12})-(\ref{cp_9}),  is 
\begin{align}
{C}_{n}^{fb}(\kappa)=  \sup_{{\cal P}_{[0,n]}^{\hat{S}}(\kappa)}\sum_{t=1}^n H(Y_t|Y^{t-1})-H(V^n) \label{ftfic_is}
\end{align}
where  $(X^n,Y^n)$ is jointly Gaussian, $H(V^n)$ is the entropy of $V^n$, $\hat{I}^n$ is the innovations process of $V^n$, and 
\begin{align}
&Y_t= X_t +V_t, \hso t=1, \ldots, n, \nonumber\\
&V_t=C_t{\hat{S}}_{t} +\hat{I}_t, \label{cp_16_alt_n}\\
& {\bf P}_t(dy_t|y^{t-1})=\int {\bf P}_t(dy |x_t,\hat{s}_{t}){\bf P}_t(dx_t|\hat{s}_t,y^{t-1}) {\bf P}_t(d\hat{s}_t|y^{t-1}), \nonumber \\ &\hspace{2.3cm} t=2, \ldots, n, \label{PO_tr_1} \\
&{\bf P}_1(dy_1)=\int {\bf P}_1(dy |x_t, \hat{s}_1){\bf P}_1(dx_1|\hat{s}_1) {\bf P}_1(d \hat{s}_1),  \label{PO_tr_2}\\
&{\bf P}_t(dy_t|y^{t-1}) \in N(\mu_{Y_t|Y^{t-1}}, K_{Y_t|Y^{t-1}}),\\
&\mu_{Y_t|Y_{t-1}} \  \mbox{is linear in $Y^{t-1}$ and $ K_{Y_t|Y^{t-1}}$ is nonrandom}, \nonumber \\
&{\bf P}_t(dx_t|\hat{s}_{t}, y^{t-1}) \in N(\mu_{X_t|\hat{S}_{t}, Y^{t-1}}, K_{X_t|\hat{S}_{t}, Y^{t-1}}),  \label{PO_tr_2_b}   \\
&\mu_{X_t|\hat{S}_{t}, Y^{t-1}} \hso \mbox{is linear in $(\hat{S}_{t}, Y^{t-1})$ and nonrandom}, \nonumber  \\
&{\cal P}_{[0,n]}^{\hat{S}}(\kappa) \tri \Big\{{\bf P}_t(dx_t|\hat{s}_{t}, y^{t-1}), t=1,\ldots,n: \nonumber\\ &\hspace{2cm} \frac{1}{n} {\bf E}\Big( \sum_{t=1}^n \big(X_t\big)^2\Big) \leq \kappa    \Big\}. \label{PO_tr_3} 
\end{align}
(b) The optimal jointly Gaussian process  $(X^n,Y^n)$ of part (a)  is  represented, as a function of a sufficient statistic,  by   
\begin{align}
&X_t=  \Lambda_{t}\Big(\hat{S}_{t}- \widehat{\hat{S}}_{t}\Big) + Z_t,\hso t=1, \ldots, n, \label{cp_13_alt_n} \\
&  Z_t \in N(0, K_{Z_t}) \hso \mbox{independent of} \hso (X^{t-1},V^{t-1}, \hat{S}^t, \widehat{\hat{S}^t}, \hat{I}^t, Y^{t-1}), \nonumber \\ & \hspace{0.8cm} t=1, \ldots, n, \nonumber \\
&  \hat{I}_t\in N(0, K_{\hat{I}_t})  \hso \mbox{independent of} \hso (X^{t-1}, V^{t-1},\hat{S}^t, Y^{t-1}, \widehat{\hat{S}^t}), \nonumber \\ & \hspace{0.8cm} t=1, \ldots, n, \nonumber\\
&Y_t= \Lambda_{t}\Big(\hat{S}_{t}- \widehat{\hat{S}}_{t}\Big) + Z_t +V_t, \hso t=1, \ldots, n, \nonumber\\
&\hso \: = \Lambda_{t}\Big(\hat{S}_{t}- \widehat{\hat{S}}_{t}\Big) +C_t{\hat{S}}_{t} +\hat{I}_t  + Z_t,  \label{cp_15_alt_n_1}  \\
&\frac{1}{n}  {\bf E} \Big\{\sum_{t=1}^{n} (X_t)^2\Big\}= \frac{1}{n} \sum_{t=1}^{n}\Big(\Lambda_t K_t \Lambda_t^T +K_{Z_t}\Big)  . \label{cp_9_al_n}
\end{align} 
where $\Lambda_t$ is nonrandom.   \\
The conditional mean and covariance, $\widehat{\hat{S}}_t$ and $K_t$, are given by  generalized Kalman-filter equations, as follows.\\
(i)   $\widehat{\hat{S}}_t$ satisfies the Kalman-filter recursion
\begin{align}
&\widehat{\hat{S}}_{t+1}=A_{t} \widehat{\hat{S}}_{t}+F_{t}(\Sigma_{t}, K_t)  I_t, \hso \widehat{\hat{S}}_{1}=\mu_{S_1},  \label{kf_m_1}  \\
&F_{t}(\Sigma_{t}, K_t) \tri \Big(A_{t}  K_{t}\big(\Lambda_t + C_t \big)^T+   M_{t}(\Sigma_{t}) K_{\hat{I}_{t}} \Big)\nonumber\\ &\hspace{1.8cm}\Big\{K_{\hat{I}_t}+   K_{Z_t} + \big(\Lambda_t + C_t \big) K_{t} \big(\Lambda_t + C_t \big)^T \Big\}^{-1}      \\
&I_t \tri Y_t -{\bf E} \Big\{Y_t\Big|Y^{t-1}\Big\}= Y_t-C_t\widehat{\hat{S}}_{t}\nonumber\\ &\hspace{0.2cm}=  \Big(\Lambda_t+C_t\Big) \Big(\hat{S}_t-\widehat{\hat{S}}_{t}\Big)+ \hat{I}_t+ Z_t, \hso t=1, \ldots, n, \label{kf_m_2} \\
&I_t \in  N(0, K_{I_t}), \hso t=1, \ldots, n \hso \mbox{is an orthogonal innovations}\nonumber\\ &\hspace{0.6cm}\mbox{process, i.e., $I_t$ is independent of $I_s$, for all $t \neq s$ }\nonumber \\
&\hspace{0.6cm} \mbox{and ${I}_t$ is independent of  $V^{t-1}$}
\end{align}
\begin{align}
&K_{Y_t|Y^{t-1}}=K_{I_t} \tri cov\big(I_t,I_t\big)\nonumber\\ &\hspace{1.1cm}=  \Big(\Lambda_t +C_t\Big)K_t \Big(\Lambda_t +C_t\Big)^T + K_{\hat{I}_t} + K_{Z_t}, \label{inno_PO}  \\
&K_{\hat{I}_t}  \hso  \mbox{given by  (\ref{cov_in_noise})}. \nonumber
\end{align}
(ii)  $K_t={\bf E}\big\{\widehat{E}_t \widehat{E}_t^T\big\}$ satisfies the generalized DRE 
\begin{align}
K_{t+1}=& A_t K_{t}A_t^T  + M_t(\Sigma_{t})K_{\hat{I}_t}\big(M_t(\Sigma_{t})\big)^T -\Big(A_t  K_{t}\big(\Lambda_t + C_t \big)^T \nonumber\\ &+ M_t(\Sigma_t)K_{\hat{I}_t}   \Big) \Big( K_{\hat{I}_t}+ K_{Z_t} 
+ \big(\Lambda_t + C_t \big) K_{t} \big(\Lambda_t + C_t \big)^T     \Big)^{-1} \nonumber\\ &.\Big(  A_t  K_{t}\big(\Lambda_t + C_t \big)^T+ M_t(\Sigma_t)K_{\hat{I}_t}      \Big)^T, \hso K_t \succeq 0, \nonumber\\& t=1, \ldots, n, \hso K_1=0. \label{kf_m_4_a}  
\end{align}
(c) An equivalent characterization of the $n-$FTFI capacity ${C}_n^{fb}(\kappa)$, defined  by (\ref{cp_12})-(\ref{cp_9}), using the sufficient statistics of part (b),   is  
\begin{align}
 {C}_n^{fb}(\kappa)
= & \sup_{  \big(\Lambda_{t}, K_{Z_t}\big), t=1, \ldots, n: \hso \frac{1}{n}  {\bf E}\big\{\sum_{t=1}^n \big(X_t\big)^2\big\}\leq \kappa } \frac{1}{2} 
\sum_{t=1}^n \log \frac{  K_{I_t}   }{K_{\hat{I}_t}} \\
=& \sup_{  \big(\Lambda_{t}, K_{Z_t}\big), t=1, \ldots, n: \hso  \frac{1}{n}\sum_{t=1}^n\big( \Lambda_t K_t \Lambda_t^T+K_{Z_t} \big)    \leq \kappa } \nonumber\\ &\frac{1}{2} 
\sum_{t=1}^n \log\Big( \frac{ \Big(\Lambda_t +C_t\Big)K_t \Big(\Lambda_t +C_t\Big)^T + K_{\hat{I}_t} + K_{Z_t}     }{K_{\hat{I}_t}}\Big). \label{cp_12_alt_1_new}
  \end{align}
\end{theorem}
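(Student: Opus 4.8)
The plan is to build on the general characterization (\ref{ftfic_is_g_in}) of Theorem~\ref{thm_FTFI} and the generalized Kalman filter of Lemma~\ref{lemma_POSS}, carrying out the three parts in order. For part (a), I would start from the chain of conditional-distribution identities established just above the theorem statement, which collapses the one-step output kernel to ${\bf P}_t(dy|X_t,\hat{S}_t)$; this is precisely the statement that $\hat{S}_t={\bf E}\{S_t|V^{t-1}\}$ is a sufficient statistic for the channel, since the residual $\hat{I}_t$ is independent of $V^{t-1}$ by (\ref{inn_po_2}) and $V_t=C_t\hat{S}_t+\hat{I}_t$. Because the objective in (\ref{ftfic_is_g_in}) is $\sum_t H(Y_t|Y^{t-1})-H(V^n)$ with $H(V^n)$ fixed by Lemma~\ref{lemma_POSS}(iv), and each $H(Y_t|Y^{t-1})$ is determined by the kernel ${\bf P}_t(dy_t|y^{t-1})$, a standard sufficiency argument (as in \cite{kumar-varayia1986}) shows the supremum is unchanged when the input is restricted to depend on $(V^{t-1},Y^{t-1})$ only through $(\hat{S}_t,Y^{t-1})$; this yields (\ref{ftfic_is}) together with the factorization (\ref{PO_tr_1})--(\ref{PO_tr_2}).

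For part (b), I would first invoke joint Gaussianity together with part (a) to write the optimal input as a linear functional of the sufficient statistic plus independent noise, $X_t=\Lambda_t\hat{S}_t+G_tY^{t-1}+Z_t$, and then decompose it about its conditional mean given $Y^{t-1}$, giving $X_t={\bf E}\{X_t|Y^{t-1}\}+\Lambda_t(\hat{S}_t-\widehat{\hat{S}}_t)+Z_t$ with $\widehat{\hat{S}}_t={\bf E}\{\hat{S}_t|Y^{t-1}\}$. The crucial step is to argue that the deterministic bias ${\bf E}\{X_t|Y^{t-1}\}$ may be set to zero without loss, establishing (\ref{cp_13_alt_n}). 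This is the main obstacle, because the bias propagates through the feedback loop into every future observation $Y^{t'}$, $t'>t$, and one must verify that this propagation leaves all covariances unchanged. The resolution rests on the decoupling of conditional covariance from conditional mean in the Gaussian filter: the error covariances $\Sigma_t$ of (\ref{dre_1}) and $K_t$ of (\ref{kf_m_4_a}) are invariant to the input means, so the entire objective, which enters only through $K_{\hat{I}_t}$ and $K_{I_t}$, is bias-independent, whereas the power ${\bf E}\{X_t^2\}$ is strictly increased by any nonzero bias; hence the optimum has zero bias.

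With the form (\ref{cp_13_alt_n}) fixed, the remainder of part (b) is the derivation of the recursions (\ref{kf_m_1}) and (\ref{kf_m_4_a}). Here I would treat $\hat{S}_t$ as the state, with dynamics $\hat{S}_{t+1}=A_t\hat{S}_t+M_t(\Sigma_t)\hat{I}_t$ from (\ref{kal_fil_noise}), observed through $Y_t$, whose innovation is computed in (\ref{kf_m_2}) as $I_t=(\Lambda_t+C_t)(\hat{S}_t-\widehat{\hat{S}}_t)+\hat{I}_t+Z_t$. Since the process noise $M_t(\Sigma_t)\hat{I}_t$ and the observation noise $\hat{I}_t+Z_t$ share the common term $\hat{I}_t$, this is a Kalman filtering problem with correlated process and measurement noise, of cross-covariance $M_t(\Sigma_t)K_{\hat{I}_t}$; applying the standard correlated-noise filter of \cite{caines1988} produces the gain $F_t(\Sigma_t,K_t)$, the innovations covariance (\ref{inno_PO}), and the DRE (\ref{kf_m_4_a}), while the orthogonality and independence of $I_t$ follow from the innovations construction.

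Finally, for part (c) I would substitute the scalar Gaussian entropies $H(I_t)=\frac12\log(2\pi e\,K_{I_t})$ and $H(\hat{I}_t)=\frac12\log(2\pi e\,K_{\hat{I}_t})$ into the objective, using $\sum_tH(Y_t|Y^{t-1})=\sum_tH(I_t)$ by orthogonality of the $Y$-innovations and $H(V^n)=\sum_tH(\hat{I}_t)$ from Lemma~\ref{lemma_POSS}(iv). The $2\pi e$ factors cancel, leaving $\frac12\sum_t\log(K_{I_t}/K_{\hat{I}_t})$; inserting $K_{I_t}$ from (\ref{inno_PO}) and rewriting the constraint via the power identity (\ref{cp_9_al_n}), namely ${\bf E}\{X_t^2\}=\Lambda_tK_t\Lambda_t^T+K_{Z_t}$, gives (\ref{cp_12_alt_1_new}) and completes the proof.
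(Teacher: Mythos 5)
Your proposal is correct and takes essentially the same route the paper indicates: it builds on the preliminary conditional-distribution calculations preceding Theorem~\ref{thm_SS} (collapse of the output kernel to ${\bf P}_t(dy|x_t,\hat{s}_t)$, hence sufficiency of $(\hat{S}_t,Y^{t-1})$), and then obtains (\ref{kf_m_1}), (\ref{inno_PO}), (\ref{kf_m_4_a}) as a generalized Kalman filter with correlated process and measurement noise (cross-covariance $M_t(\Sigma_t)K_{\hat{I}_t}$), which is precisely the structure the paper attributes to its deferred derivation in \cite[Section VI.D]{charalambous2020new}. Since the paper's in-text proof is only an outline by reference, your added details --- the zero-bias reduction via invariance of the error covariances and strict power increase, and the explicit correlated-noise filter computation --- are consistent elaborations of, not departures from, the paper's approach.
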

\begin{proof} The derivation, although lengthy, is based on the preliminary calculations prior to the statement of the theorem (see \cite[Section VI.D]{charalambous2020new}). 
\end{proof}

The analysis of the per unit time limit  $C^{fb}(\kappa) \tri \lim_{n \longrightarrow \infty} \frac{1}{n} C_n^{fb}(\kappa)$ is carried out in 
\cite{charalambous2020new}.

\begin{remark} 
The characterization of $n-$FTFI capacity  ${C}_n^{fb}(\kappa)$ given by (\ref{cp_12_alt_1_new}), involves the generalized matrix DRE $K_t$ which is also a functional of the generalized matrix DRE $\Sigma_t$ of the error covariance of the state $S^n$ from the noise output $V^n$.  This feature does not appear  in  \cite{kim2010} and recent literature \cite{kim2010,liu-han2019,gattami2019,ihara2019,li-elia2019}, because as explained in R4), 
\end{remark}

Further to the above remark, if the conditions below hold, \\
A.1) The  feedback code  assumes knowledge of  the  initial state of the noise or the channel, $S_1=s$,  at the encoder and the decoder, and \\
A.2) the noise sequence  $V^{t-1}$ and initial state $S_1=s$  uniquely defines the noise state sequence $S^t$ and vice-versa for $t=1, \ldots, n$, 

then  in Theorem~\ref{thm_SS}, $X_t$ is reduced to   $X_t =\Lambda_t \Big(S_t -{\bf E}\Big\{S_t\Big|Y^{t-1},s_1\Big\}+Z_t, t=1,\ldots, n$, and all equations are  simplified, precisely as in  Yang, Kanvic and Tatikonda \cite{yang-kavcic-tatikonda2007}.

\label{sect_POSS}

\section{Conclusion}
New equivalent sequential characterizations of the  Cover and Pombra \cite{cover-pombra1989}  ``$n-$block'' feedback capacity  formulas are derived using time-domain methods,  for  additive Gaussian noise (AGN) channels driven by nonstationary  Gaussian noise. The new feature of the equivalent characterizations are the representation of  the  optimal channel input process by a {\it sufficient statistic and  Gaussian orthogonal innovations process}. The sequential characterizations of the $n-$block feedback capacity formula are expressed  as a functional of   two generalized matrix difference Riccati equations (DRE) of filtering theory of Gaussian systems. 

\section{Acknowledgements}
This work was supported in parts by  the European Regional Development Fund and
 the Republic of Cyprus through the Research Promotion Foundation Projects EXCELLENCE/1216/0365 
and  EXCELLENCE/1216/0296

\newpage 

\bibliographystyle{IEEEtran}

\bibliography{Bibliography_capacity}

\end{document}